\documentclass[11pt,letterpaper]{article}
\usepackage[margin=1.0in]{geometry}
\usepackage{amsmath,amssymb,amsthm,array,tabularx,booktabs}
\usepackage[T1]{fontenc}
\usepackage{lmodern}
\usepackage{microtype}
\usepackage[table]{xcolor}
\usepackage[hidelinks]{hyperref}
\newtheorem{proposition}{Proposition}
\newtheorem{lemma}{Lemma}
\newtheorem*{theorem}{Theorem}

\newtheorem*{statement}{Statement}

\newcommand{\defeq}{\mathrel{\overset{\mathrm{def}}{=}}}
\newcommand{\eps}{\epsilon}
\newcommand{\ph}{\phi}

\usepackage{lmodern,amsmath,amssymb,tikz}

\usepackage[font=small,labelfont=bf]{caption}

\usetikzlibrary{arrows.meta}
\definecolor{searched}{HTML}{BCC3CB}
\definecolor{sweep}{HTML}{24749C}
\definecolor{axisgray}{HTML}{6C7580}
\definecolor{guidegray}{HTML}{D4D8DE}
\definecolor{frontier}{HTML}{243B4C}

\tikzset{
 pathaxis/.style={draw=axisgray,line width=.65pt,-{Stealth[length=2mm]}},
 oldground/.style={draw=searched,line width=5pt,line cap=butt},
 newground/.style={draw=sweep,line width=5pt,line cap=butt},
 guide/.style={draw=guidegray,line width=.4pt,dash pattern=on 2pt off 3pt},
 paneltitle/.style={font=\large},
 pathlabel/.style={font=\large},
 ticklabel/.style={font=\small},
 state/.style={font=\fontsize{8.5}{10}\selectfont},
 steplabel/.style={font=\small,text=sweep,fill=white,inner sep=1pt}
}

\title{On the Cyclic Assumption of the Cow-Path Search Algorithm}
\author{Yuan Ma \and Yiqun Lisa Yin\thanks{Corresponding author: \text{yiqun@alum.mit.edu}}}

\date{}
\begin{document}
\maketitle

\begin{abstract}
In the cow-path problem, a cow must find a goal lying at an unknown distance on one of
$w$ paths connected only at the origin, and performance is measured by competitive ratio. An
efficient randomized algorithm is designed in~\cite{KRT}, in which the cow visits the paths in a fixed cyclic order.
The algorithm is proved to be optimal for $w=2$ in~\cite{KRT}, and for all $w$ in~\cite{KMSY} with a claim that no algorithm does better than the best cyclic one. This note provides a detailed proof of that claim.
\end{abstract}

\section{Introduction} \label{section-intro}
In the $w$-lane cow-path problem, there are $w$ infinite paths from a common
origin and a goal lying at an unknown distance on one of these paths.
A cow (or robot) starts at the origin but does not know which path contains the goal or how far
the goal is, and it discovers the goal only upon reaching it. Its objective is to find the goal while
minimising the total distance travelled, and its performance is measured by competitive ratio. 

Baeza-Yates, Culberson and Rawlins~\cite{BCR} determined the optimal deterministic
competitive ratio, $1+2w^w/(w-1)^{w-1}$, which is $9$ for $w=2$. Kao, Reif and Tate~\cite{KRT} introduced a randomized algorithm with competitive ratio
\[
R_w=\min_{r>1}\Bigl\{1+\frac{2}{w}\cdot\frac{1+r+\cdots+r^{w-1}}{\ln r}\Bigr\},
\]
which is approximately $4.5911$ for $w=2$, proved its optimality
for $w=2$, and conjectured its optimality for $w>2$. Kao, Ma, Sipser and Yin~\cite{KMSY} subsequently proved this conjecture.

A common feature of these analyses is that the algorithm visits the paths in a fixed \emph{cyclic} order, which means that the algorithm
visits all the $w$ paths in its first $w$ visits and then repeatedly searches these paths in that order. Fact~4 in~\cite{KMSY} stated explicitly that it suffices to analyze these cyclic algorithms but left a detailed proof  
to a manuscript by us~\cite{MY},
which was never formally published. This note constructs a complete proof from some surviving hand-written notes.

\section{Preliminaries}

In this section, we formally introduce terminology and notation. Appendix~\ref{section-notation} will list all our terms and notation for ease of reference.
   
\subsection{Algorithms}
A search \emph{algorithm} is represented by a sequence
\[
X=\bigl((P_t,\ell_t)\bigr)_{t\ge1},
\]
fixed in advance and independent of the goal's location, where $P_t$ and  $\ell_t$ are  the path and length of step $t$, respectively. 
To avoid any ambiguity, a \emph{step} in an algorithm sweeps a path from the origin to
a finite distance and then returns to the origin.  

We write $f_t(P)$ for path
$P$'s {\it frontier}, defined to be the distance to which it has been swept when step $t$ is about to
begin. Note that $f_{t+1}(P_t)=\ell_t$.

Step $t$ costs $2\ell_t$ in travelled distance, out and back---with one exception, where the goal 
located at some distance $h$ from the origin is found during this step. 
On that step the cow stops at the goal on the outward leg and does not return, so step $t=k$ costs $h$ rather
than $2\ell_k$, and the algorithm ends. 
The total distance travelled is $2\sum_{i<k}\ell_i+h$, the sum
running over steps $1,\ldots,k-1$ and excluding step $k$ itself.

Throughout, $w\ge2$; for $w=1$ the cow simply walks down the single path and the ratio is $1$. Without loss of generality, we only need to work on algorithms with a finite expected competitive ratio, because such an algorithm was constructed in~\cite{KRT} already. 
This requires that the cow will not continue on a path indefinitely (i.e., $\ell_t$ is finite for all $t$) and it will search every
path without a constant upper bound (i.e., each path's sweep lengths are unbounded): otherwise the goal may never be found, leading to an infinite expected competitive ratio.

\subsection{The distribution and competitive ratio}

At the highest level, Yao's principle~\cite{Yao} allows us to lower-bound the worst-case expected ratio
of every randomized algorithm by proving a lower bound on the expected ratio of every deterministic algorithm
under a fixed input distribution. The goal's distribution is that specified in~\cite[p.~70]{KRT}, where
the path containing the goal is chosen uniformly from the $w$ paths, and the distance $x$ along
that path has density
\[
f_{\eps,w}(x)=
\begin{cases}
\eps x^{-(1+\eps)},&x\ge1,\\
0,&\text{otherwise}.
\end{cases}
\]
Throughout, $\eps>0$ is fixed.

Let $H$ denote the per-path measure, with density
\[
dH(h)=\frac{\eps}{w}h^{-(1+\eps)}\,dh,\qquad h\ge1.
\]
Thus each path carries mass $1/w$. If the goal is found at step $k$, the competitive ratio is
\[
1+\frac{2\sum_{i<k}\ell_i}{h},
\]
where the leading $1$ accounts for the final trip to the goal and the remaining term accounts for
unsuccessful round trips. 

Deleting search steps that do not cover any new ground, we can also assume
\begin{equation}\label{feasible}
\ell_t>f_t(P_t)\qquad\text{for every }t.
\end{equation}
This way, the outward leg retraces the already swept portion
$[0,f_t(P_t)]$, and only $(f_t(P_t),\ell_t] = (f_t(P_t),f_{t+1}(P_t)]$ is searched for the first time. 
Under this assumption (otherwise the integrals below would be ill-defined), summing over the goal locations gives
\begin{equation}\label{ratio}
\text{expected competitive ratio of }X
=\sum_k\int_{f_k(P_k)}^{\ell_k}\frac{2\sum_{i<k}\ell_i+h}{h}\,dH(h).
\end{equation}

Each unit of unsuccessful travel adds $1/h$ to the ratio. For a path searched to frontier $x$,
define
\begin{equation}\label{hazard}
\ph(x)\defeq\int_x^\infty\frac{dH(h)}{h}
=\frac{\eps}{w(1+\eps)}x^{-(1+\eps)}.
\end{equation}

For ease of notation, define
\begin{equation}\label{constants}
\alpha\defeq1+\eps,\qquad \beta\defeq(1+\alpha)^{1/\alpha},\qquad
\kappa\defeq\frac{\eps}{w(1+\eps)}.
\end{equation}
Then
\begin{equation}\label{small-phi}
\ph(x)=\kappa x^{-\alpha}.
\end{equation}
Also,
\begin{equation}\label{beta}
\beta^\alpha=1+\alpha=2+\eps,
\end{equation}
so $\beta>1$, and $\beta\to2$ as $\eps\to0$.

\subsection{States, remaining weights, and contributions}
The \emph{state} of the search is defined by the collection of all paths' frontiers plus the location of the cow. 
By default, the cow is understood to be at the origin, when not specified explicitly, and so a state can be understood to be the collection of all frontiers of the paths. With such an understanding, we use capital letter $F$ to denote a state. 

In particular, $F$ assigns to each path $P$ a frontier $F(P)$. We write $F_t$ for the state when step $t$
is about to be taken, so that $F_t(P)=f_t(P)$. 

We call $\ph(x)$ the \emph{remaining weight} on the path: the probability mass beyond $x$ normalized by $h$,
reflecting that the same extra travel is more costly relative to a nearby goal.
The function $\ph$ is positive, strictly decreasing, and smooth; searching farther reduces this
remaining weight. The constant $\kappa$ is an overall positive factor in every comparison below
and cancels; only the exponent $\alpha$ matters.

For a given state $F$, summing the remaining weights over all paths gives us the \emph{total remaining weight}
\begin{equation}\label{Phi}
\Phi(F)\defeq\sum_P\ph\bigl(F(P)\bigr), \mbox{ and}
\end{equation}

\begin{equation}\label{Phi_t}
\Phi_t\defeq\Phi(F_t).
\end{equation}
\noindent
The total remaining weight depends only on the current frontiers and is
unchanged by relabelling paths. Note that under such definitions, $\Phi_t$ and $\Phi_{t+1}$ are total remaining weights before
and after step $t$, respectively.

We next define the \emph{contribution} for a full round-trip step $t$ that travels the distance $2\ell_t$ for goal locations that remain
un-searched after step $t$. The total remaining weight after step $t$ is $\Phi_{t+1}$, and step $t$ \emph{contributes}
$2\ell_t\Phi_{t+1}$ to the expected future cost. Summing over all steps of algorithm $X$ starting from $F$ gives us the following:

\begin{equation}\label{objective}
E_F(X)\defeq2\sum_t\ell_t\Phi_{t+1}.
\end{equation}
Notice that the above quantity is unchanged by
permuting the labels of the paths, since $\Phi$ is a symmetric sum and all paths carry the same $\ph$ and
the same dynamics. In addition, $\ell_t$ affects both the travel distance now and the post-step
remaining weight through $\ph(\ell_t)$: a longer sweep costs more travel now but leaves less remaining weight.

For the initial state $F_0$, in which $F_0(P)=1$ for every path $P$, interchanging the nonnegative
terms in~\eqref{ratio} gives the expected competitive ratio
\begin{equation}\label{oneplus}
1+E_{F_0}(X).
\end{equation}
Minimising the expected competitive ratio is therefore equivalent to minimising $E_{F_0}$,
which we will work on in the rest of the paper.

Equation~\eqref{objective} is a sum over all steps, so comparisons of algorithms modified from a certain step onwards must account for their
\emph{continuations}, defined to be the remaining steps of an algorithm starting from a certain state. By~\eqref{objective} the contribution of a step depends on the past only through the current
state. Identical continuations started at the same state have equal contributions, and the
infimum of the contribution of a continuation depends only on that state.

Finally, since we only need to deal with algorithms with finite competitive ratios, we assume in all our arguments that the algorithms we will modify all have finite contributions.


\section{Proposition 1: prioritize a least-frontier path}
\begin{proposition}[least-frontier first]\label{least}
If  algorithm $X$ sweeps a path whose frontier
is strictly larger than the least frontier at some step, then $X$ can be modified into another algorithm $X^*$ such that starting from that state
and that step the contribution of $X^*$ is strictly less than that of $X$.
\end{proposition}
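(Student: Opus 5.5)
The plan is an exchange argument that exploits the symmetry of $\Phi$ under relabelling paths. Fix the step $t$ and the state $F=F_t$. Let $A=P_t$ be the path swept at step $t$, with $a\defeq F(A)$, and let $B$ be a path with the least frontier $b\defeq F(B)<a$. Write $Y=\bigl((P_s,\ell_s)\bigr)_{s\ge t}$ for the continuation of $X$ from $F$; by assumption its contribution $E_F(Y)$ is finite. Define $X^*$ to agree with $X$ before step $t$. From step $t$ on, $X^*$ runs $Y$ with the labels $A$ and $B$ interchanged. In particular, at step $t$ it sweeps the least-frontier path $B$ to $\ell_t$, and $\ell_t>a>b$, so \eqref{feasible} holds there.

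The first step is to rewrite the contribution of $X^*$. By the remark after \eqref{objective}, relabelling paths does not change contributions. Hence the contribution of $X^*$ from $F$ equals the contribution of the \emph{unrelabelled} continuation $Y$ started from the swapped state $F'$, where $F'(A)=b$, $F'(B)=a$, and $F'(P)=F(P)$ for every other path. So it suffices to show that running $Y$ from $F'$ costs strictly less than running $Y$ from $F$.

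The second step compares the two runs of $Y$ step by step. The step lengths $\ell_s$ are identical in both runs, so only the post-step weights $\Phi_{s+1}$ can differ.
\begin{itemize}
\item After step $t$, path $A$ has frontier $\ell_t$ in both runs.
\item Every other path except $B$ has identical frontiers in both runs throughout.
\item For path $B$, the frontier in the run from $F'$ is always $\max(a,g_s)$, where $g_s$ is its frontier in the run from $F$. This is because any sweep of $B$ replaces the frontier by the larger of the old frontier and the sweep length.
\end{itemize}
Since $\ph$ is strictly decreasing, $\Phi'_{s+1}\le\Phi_{s+1}$ for every $s\ge t$. At $s=t$ the inequality is strict: the $B$-term is $\ph(a)<\ph(b)$, and $\ell_t>0$. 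Therefore
\[
2\sum_{s\ge t}\ell_s\Phi'_{s+1}\;\le\;2\sum_{s\ge t}\ell_s\Phi_{s+1}-2\ell_t\bigl(\ph(b)-\ph(a)\bigr),
\]
and the right-hand side is strictly less than $E_F(Y)$ because that quantity is finite. This gives the claimed strict decrease.

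The step I expect to need the most care is feasibility. In the run from $F'$, some later sweep of $B$ (in the original labelling) may have length $\ell_s\le a$. Such a step then covers no new ground and violates \eqref{feasible}, and the integral representation \eqref{ratio} is only valid under \eqref{feasible}. I will handle this by first working with the form \eqref{objective}, where the comparison above makes sense for any sequence of steps. I then delete every non-covering step, as in the preliminaries. A deleted step changes no frontier, so deleting it only removes a nonnegative term $2\ell_s\Phi'_{s+1}$ and leaves every other term unchanged. The contribution therefore only decreases further. I must also check that unboundedness of every path's sweep lengths survives both the relabelling and the deletions, which is immediate because only finitely many steps are affected. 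Putting this together, the resulting algorithm $X^*$ is a legitimate algorithm whose contribution from that state and step is strictly less than that of $X$.
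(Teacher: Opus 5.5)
Your proposal is correct and is essentially the paper's proof. It uses the same exchange: sweep the least-frontier path to $\ell_t$ instead, interchange the two labels for the rest of the run, and drop the finitely many later sweeps of the formerly least path that no longer cover new ground. It also gets the same strict gain $2\ell_t\bigl(\phi(b)-\phi(a)\bigr)$ at step $t$, and no loss afterwards because that path's frontier is $\max(a,g_s)\ge g_s$; recasting this as one continuation run from two relabelled states, with the deletions done after the comparison, is only a presentational variant.
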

\begin{proof}
Suppose that in $X$, starting from state $F$, step $t$ first sweeps path $B$, whose frontier is $b$, to a distance
$y>b$, while another path $A$ has frontier $a<b$.

Construct $X^*$ from $X$ as follows. Make no change before step $t$, but let step $t$ sweep $A$ to $y$, which is larger than $b>a$, and then follow
the remaining steps of $X$ with the labels $A$ and $B$ interchanged, keeping all search distances unchanged.

To make $X^*$ satisfy condition (\ref{feasible}), we skip any prescribed sweep of $B$ whose searched distance is at most $b$ (note the relabelled continuation of $X$ starts with $A$ at $y$ and $B$
at $a$, whereas $X^*$ starts its continuation with $A$ at $y$ and $B$ already at $b>a$). The prescribed search distances on $B$ are strictly increasing
and unbounded, and so only finitely many sweeps are skipped.

\bigskip

\begin{figure}[!ht]
\centering
\begin{minipage}{0.9\textwidth}
\resizebox{0.9\linewidth}{!}{%
\begin{tikzpicture}[x=.60cm,y=1.15cm]
\foreach \offset/\title/\aa/\ab/\ea/\eb/\statea/\stateb in {
 0/{$X$}/2/4/2/7/a/y,
 9.3/{$X^*$ before $A\leftrightarrow B$}/2/4/7/4/y/b,
 18.6/{$X^*$ after $A\leftrightarrow B$}/4/2/4/7/b/y}{
 \begin{scope}[xshift=\offset*.60cm]
  \node[paneltitle] at (3.6,2.05) {\title};
  \foreach \z/\lab in {2/a,4/b,7/y}{
   \draw[guide] (\z,-.15)--(\z,1.48);
   \node[ticklabel] at (\z,-.35) {$\lab$};
  }
  \node[ticklabel] at (0,-.35) {$0$};
  \foreach \h/\name/\old/\end in {1.15/A/\aa/\ea,0/B/\ab/\eb}{
   \draw[pathaxis] (0,\h)--(7.85,\h);
   \node[pathlabel] at (-.55,\h) {$\name$};
   \draw[oldground] (0,\h)--(\old,\h);
   \ifdim\end pt>\old pt
    \draw[newground] (\old,\h)--(\end,\h);
   \fi
   \fill[frontier] (\end,\h) circle[radius=1.6pt];
  }
  \node[state] at (3.6,-.83) {Post-step-$t$ frontiers: $(\statea,\stateb)$};
 \end{scope}
}
\draw[oldground] (6.0,-1.43)--(6.8,-1.43);
\node[anchor=west,font=\small] at (7.0,-1.43) {Previously searched};
\draw[newground] (15.1,-1.43)--(15.9,-1.43);
\node[anchor=west,font=\small] at (16.1,-1.43) {sweep in step $t$};
\end{tikzpicture}%
}
\caption{Comparison of $X$ and $X^*$ in the proof of Proposition 1.}
\label{fig:prop1}
\end{minipage}
\end{figure}

For $X$, let $C_1,C_2$ be the contributions of step $t$
and continuation, respectively.  Let $C^*_1,C^*_2$ be the corresponding contributions of $X^*$. It suffices
to show
\[
C^*_1<C_1,\qquad C^*_2\le C_2.
\]

For step $t$, $X$ leaves $A$ at $a$ and advances $B$ to $y$, while $X^*$ advances $A$ to $y$
and leaves $B$ at $b$. All other frontiers are unchanged. The post-step remaining weights are therefore
\[
\Phi(F)-\ph(b)+\ph(y)
\quad\text{and}\quad
\Phi(F)-\ph(a)+\ph(y),
\]
for $X$ and $X^*$ respectively. Both steps have length $y$, so
\begin{equation}\label{least-gap}
\begin{aligned}
C_1-C^*_1
&=2y\bigl[\Phi(F)-\ph(b)+\ph(y)\bigr]
  -2y\bigl[\Phi(F)-\ph(a)+\ph(y)\bigr]\\
&=2y\bigl[\ph(a)-\ph(b)\bigr]>0,
\end{aligned}
\end{equation}
because $a<b$ and $\ph$ is strictly decreasing.

For the continuation, interchanging the labels $A$ and $B$ preserves the contribution of $X$,
since $\Phi$ is symmetric in the path frontiers. Compare this relabelled continuation with that of
$X^*$. Until the first prescribed sweep of $B$ beyond $b$, the frontier of $B$ in $X^*$ is at least as large as that of $X$,
while all other frontiers agree at corresponding steps. From that sweep onward, the states coincide.
Consequently, every retained step has the same length and no larger post-step remaining weight, and so it contributes
no more. The skipped steps contribute nothing. Thus
\[
C^*_2\le C_2.
\]
Combining the comparisons gives
\[
E_F(X^*)=C^*_1+C^*_2<C_1+C_2=E_F(X). \qedhere
\]
\end{proof}

\section{Proposition 2: each sweep advances its path by strictly more than a factor \texorpdfstring{$\beta$}{beta}}

In what follows, let $t$ be a step in algorithm $X$, let $t^+$ be the next step
sweeping $P_t$, and define
\begin{equation}\label{window}
W_t\defeq\sum_{t\le j<t^+}\ell_j, \qquad \widehat W_t\defeq \sum_{t<j<t^+}\ell_j.
\end{equation}

\begin{lemma}[stationarity]\label{stationarity}
Unless
\begin{equation}\label{stationarity-eq}
-\phi'(\ell_t) W_t = \Phi_{t+1},
\end{equation}
replacing $\ell_t$ by a suitable $y$ (and leaving all other steps unchanged) yields an algorithm
contributing strictly less than $X$. 
\end{lemma}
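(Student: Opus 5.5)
We treat the contribution $E_F(X)$ as a function $g(y)$ of the single variable $y=\ell_t$, holding every other step of $X$ fixed, and show that $g$ is differentiable at $y=\ell_t$ with derivative $2\bigl[\Phi_{t+1}+\phi'(\ell_t)W_t\bigr]$. Unless~\eqref{stationarity-eq} holds, this derivative is nonzero. A small perturbation of $\ell_t$ in the appropriate direction then strictly decreases the contribution.

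First we identify which terms of~\eqref{objective} depend on $y$. Steps before $t$ do not, since their post-step weights $\Phi_{j+1}$ with $j<t$ are computed before step $t$ occurs. Step $t$ contributes $2y\,\Phi_{t+1}(y)$, where $\Phi_{t+1}(y)=\Phi_t-\phi(f_t(P_t))+\phi(y)$. For $t<j<t^+$, path $P_t$ is not swept, so its frontier during these steps is $y$. Hence $\Phi_{j+1}(y)=\Phi_{j+1}-\phi(\ell_t)+\phi(y)$, and step $j$ contributes $2\ell_j\Phi_{j+1}(y)$. From step $t^+$ onward, the frontier of $P_t$ equals $\ell_{t^+}$ (or later values), independent of $y$, provided $y<\ell_{t^+}$. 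So those contributions are constant. This gives
\[
g(y)=\text{const}+2y\bigl[\Phi_{t+1}-\phi(\ell_t)+\phi(y)\bigr]+2\widehat W_t\bigl[\phi(y)-\phi(\ell_t)\bigr],
\]
valid for $y$ in the open interval $(f_t(P_t),\ell_{t^+})$. This interval contains $\ell_t$ by~\eqref{feasible} applied to steps $t$ and $t^+$. If $P_t$ is never swept again, we take $\ell_{t^+}=\infty$. Finiteness of $\widehat W_t$ in that case, and convergence of all series, follow from the assumption that $X$ has finite contribution. Restricting $y$ to this interval keeps the modified algorithm feasible in the sense of~\eqref{feasible}, so no steps need to be skipped or relabelled.

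Next we differentiate, using the smoothness of $\phi$ from~\eqref{small-phi}:
\[
g'(\ell_t)=2\Phi_{t+1}+2\ell_t\phi'(\ell_t)+2\widehat W_t\phi'(\ell_t)=2\bigl[\Phi_{t+1}+\phi'(\ell_t)W_t\bigr],
\]
since $W_t=\ell_t+\widehat W_t$. If $-\phi'(\ell_t)W_t\neq\Phi_{t+1}$, then $g'(\ell_t)\neq0$. Choosing $y=\ell_t-\delta\,\mathrm{sgn}\,g'(\ell_t)$ with $\delta>0$ small enough to stay in the interval gives $g(y)<g(\ell_t)$, as required.

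The main obstacle is the bookkeeping in the first step. We must check carefully that only steps $t,\dots,t^+-1$ feel the change, and that the infinite tail contributes a constant that is finite and independent of $y$. The case where $P_t$ is never revisited also needs handling. That case cannot actually occur, since sweep lengths on every path are unbounded, but the argument above covers it anyway.
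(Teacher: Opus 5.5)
Your proposal is correct and follows essentially the paper's argument. You isolate the $y$-dependent terms: step $t$ and the steps strictly between $t$ and $t^+$, with the tail from $t^+$ onward constant whenever $f_t(P_t)<y<\ell_{t^+}$. You differentiate to get $2[\Phi_{t+1}+\phi'(\ell_t)W_t]$ at $y=\ell_t$, and perturb against the sign of that derivative.

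One small quibble concerns the case where $P_t$ is never swept again. There $\widehat W_t$ would be infinite, since infinitely many later steps each have length greater than $1$, and so the contribution would be infinite too. Finite contribution therefore excludes that case rather than making $\widehat W_t$ finite, as you in effect acknowledge at the end.
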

\begin{proof}

Define
\[
\widehat{\Phi}_{t+1}\defeq \Phi_{t+1}-\ph(\ell_t),
\]
the total remaining weight (post-step-$t$) associated with all paths other than $P_t$.

We vary only the length of step $t$, replacing $\ell_t$ by $y$, while leaving all other steps unchanged.
The feasible range is 
$f_t(P_t) <y<\ell_{t^+}$, so $y=\ell_t$ is an interior point and can be varied in either direction.

Step $t$ now contributes $2y[\widehat{\Phi}_{t+1}+\ph(y)]$. During the steps strictly between $t$ and $t^+$, path $P_t$
remains at frontier $y$, so its associated contribution is $2\widehat W_t\ph(y)$. At step $t^+$, its frontier becomes
$\ell_{t^+}$. Thus the total contribution is
\begin{equation}\label{vary}
C(y)=K+2y\bigl[\widehat{\Phi}_{t+1}+\ph(y)\bigr]+2\widehat W_t\ph(y),
\end{equation}
where $K$ collects all contributions unaffected by $y$.

Differentiating~\eqref{vary},
\[
C'(y)=2\widehat{\Phi}_{t+1}+2\ph(y)+2(y+\widehat W_t)\ph'(y).
\]
At $y=\ell_t$, we have $\widehat{\Phi}_{t+1}+\ph(\ell_t)=\Phi_{t+1}$ and $\ell_t+\widehat W_t=W_t$, and hence
\[
C'(\ell_t)=2\Phi_{t+1}+2\ph'(\ell_t)W_t.
\]
If this derivative is positive, a small decrease of $\ell_t$ improves the algorithm; if it is negative, a small increase
does. Thus an algorithm admitting no one-step variation improvement must satisfy
\[
\Phi_{t+1}=-\ph'(\ell_t)W_t. \qedhere
\]
\end{proof}

Lemma~\ref{stationarity} says that each step must balance its current sweep length against the accumulated
sweep lengths until that path is swept again. 
We call $X$ \emph{stationary} if equality~\eqref{stationarity-eq} holds at every step. 

\begin{proposition}[advance factor]\label{advance}
For a stationary algorithm, if
\[
\ell_t\le\beta f_t(P_t)
\]
at some step $t$, then deleting step $t$ yields an algorithm with strictly less contribution.
\end{proposition}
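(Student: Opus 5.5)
The plan is to compute the change in contribution caused by deleting step $t$ exactly, then use the stationarity identity of Lemma~\ref{stationarity} to replace $\Phi_{t+1}$, and finally use $\ell_t\le\beta f_t(P_t)$ together with $\beta^\alpha=1+\alpha$ from~\eqref{beta} to get a sign.

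Write $P=P_t$, $x=f_t(P)$, $\ell=\ell_t$, so $x<\ell$ by~\eqref{feasible}. Let $X'$ be $X$ with step $t$ deleted.

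\emph{Feasibility of $X'$.} The next sweep of $P$ is step $t^+$, and it has $\ell_{t^+}>\ell>x$. So~\eqref{feasible} still holds in $X'$, and no other path's sweeps are affected.

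\emph{Where the two algorithms differ.} Before step $t$ the states of $X$ and $X'$ coincide. After step $t^+$ they coincide again, because both have $P$ at $\ell_{t^+}$ and all other frontiers agree. By the remark after~\eqref{objective}, the contributions from step $t^+$ onward are therefore equal, and so is the contribution of step $t^+$ itself, since its post-step state is the same in both. The only differences are these two:
\begin{itemize}
\item the term $2\ell\,\Phi_{t+1}$ of step $t$ disappears;
\item for each $j$ with $t<j<t^+$, the post-step remaining weight of step $j$ rises by $\ph(x)-\ph(\ell)$, because path $P$ now stays at frontier $x$ instead of $\ell$.
\end{itemize}
With $\widehat W_t$ as in~\eqref{window}, this gives
\[
E(X')-E(X)=-2\ell\,\Phi_{t+1}+2\widehat W_t\bigl[\ph(x)-\ph(\ell)\bigr].
\]

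\emph{Substituting stationarity.} Lemma~\ref{stationarity} gives $\Phi_{t+1}=-\ph'(\ell)W_t$. With $\ph(x)=\kappa x^{-\alpha}$ from~\eqref{small-phi}, we have $-\ph'(\ell)=\alpha\kappa\ell^{-\alpha-1}$, so $\ell\,\Phi_{t+1}=\alpha\kappa\ell^{-\alpha}W_t$. Dividing out $2\kappa>0$, it therefore suffices to show
\[
\widehat W_t\bigl(x^{-\alpha}-\ell^{-\alpha}\bigr)<\alpha\,\ell^{-\alpha}\,W_t .
\]

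\emph{Using the advance-factor hypothesis.} From $\ell\le\beta x$ we get $x^{-\alpha}\le\beta^\alpha\ell^{-\alpha}=(1+\alpha)\ell^{-\alpha}$. Hence $x^{-\alpha}-\ell^{-\alpha}\le\alpha\ell^{-\alpha}$, and the left-hand side is at most $\alpha\ell^{-\alpha}\widehat W_t$. This is strictly less than $\alpha\ell^{-\alpha}W_t$, because $W_t=\widehat W_t+\ell$ and $\ell>0$. So $E(X')<E(X)$.

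\emph{Main obstacle.} The delicate part is the bookkeeping, not the inequality. One must check that deleting step $t$ changes only the window strictly between $t$ and $t^+$, and that step $t^+$ and everything after it contribute identically in both algorithms. This is exactly why stationarity is stated with the window $W_t$ ending just before $t^+$.

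Finiteness of all contributions is assumed throughout, so subtracting the two infinite sums term by term is legitimate. The strict inequality comes only from the extra $\ell$ in $W_t$ compared with $\widehat W_t$. This is what makes the threshold exactly $\beta=(1+\alpha)^{1/\alpha}$: equality $x^{-\alpha}-\ell^{-\alpha}=\alpha\ell^{-\alpha}$ holds precisely when $\ell=\beta x$.
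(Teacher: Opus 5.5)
Your proof is correct and follows the paper's argument essentially step for step. It weighs the same saving $2\ell_t\Phi_{t+1}$ against the same penalty $2\widehat W_t[\phi(f_t(P_t))-\phi(\ell_t)]$, substitutes stationarity via $-\phi'(\ell_t)=\alpha\phi(\ell_t)/\ell_t$, bounds $\phi(f_t(P_t))-\phi(\ell_t)\le\alpha\phi(\ell_t)$ using $\beta^\alpha=1+\alpha$, and gets strictness from $W_t=\widehat W_t+\ell_t$.
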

\begin{proof}
Define
\[
p\defeq f_t(P_t). 
\]
Deleting step $t$ saves its contribution $2\ell_t\Phi_{t+1}$. However, until step $t^+$, path $P_t$
remains at frontier $p$ rather than $\ell_t$, adding contribution
\[
2[\ph(p)-\ph(\ell_t)]\widehat W_t.
\]
Note that condition (\ref{feasible}) holds since the next sweep of this path reaches $\ell_{t^+}>\ell_t>f_t(P_t)$
and that every path remains swept without bound. Thus it suffices to show that the additional
contribution is smaller than the saving.

Since $X$ is stationary, by Lemma~\ref{stationarity}, the following equality holds:
\[
\Phi_{t+1}=-\ph'(\ell_t)W_t.
\]
Since $-\ph'(\ell_t)=\alpha\ph(\ell_t)/\ell_t$, we obtain
\begin{equation}\label{advance-balance}
\alpha\ph(\ell_t)\widehat W_t
=\ell_t\Phi_{t+1}-\alpha\ell_t\ph(\ell_t)
<\ell_t\Phi_{t+1}.
\end{equation}

Now let $\rho=\ell_t/p$. Since $1<\rho\le\beta$ and $\beta^\alpha=1+\alpha$, we have
\begin{equation}\label{advance-hazard}
\ph(p)-\ph(\ell_t)
=\ph(\ell_t)(\rho^\alpha-1)
\le\alpha\ph(\ell_t).
\end{equation}
Combining inequalities (\ref{advance-balance}) and (\ref{advance-hazard})  gives
\begin{equation}\label{advance-deletion}
[\ph(p)-\ph(\ell_t)]\widehat W_t
\le\alpha\ph(\ell_t)\widehat W_t
<\ell_t\Phi_{t+1}.
\end{equation}
The deletion penalty is therefore strictly smaller than the saving, so deleting step $t$ yields
an algorithm contributing strictly less.
\end{proof}

\section{Proposition 3: step lengths are nondecreasing}
In this section, we analyze the effect of exchanging the search lengths of two consecutive sweeps, especially when the lengths decrease.
By relabelling the paths, we can reduce the comparison to
the two exchanged steps. 

Let $X$ be an algorithm with a finite contribution. Suppose step $k$ sweeps path $A$ from $a$ to $u$,
and step $k+1$ sweeps a distinct path $B$ from $b$ to $v$. 
Define $X^*$ as follows:
\begin{itemize}
\item Steps $1,\ldots,k-1$ are the same as in $X$.
\item Step $k$ sweeps $A$ from $a$ to $v$, and step $k+1$ sweeps $B$ from $b$ to $u$.
\item For every step $j\ge k+2$, use the same sweep length as $X$, with the path labels $A$ and $B$ interchanged.
\end{itemize}

\begin{lemma}[exchange of adjacent sweep lengths]\label{exchange}
For any algorithm X and its corresponding $X^*$, if $\min(u,v) > \max(a,b)$, then
\begin{equation}\label{exchange-difference} 
E_{F_0}(X)-E_{F_0}(X^*)=2\bigl[(u-v)\ph(b)+v\ph(u)-u\ph(v)\bigr].
\end{equation}
\end{lemma}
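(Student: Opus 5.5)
The plan is to show that $X$ and $X^*$ differ only in the contributions of steps $k$ and $k+1$, and then to compute those four terms directly from~\eqref{objective}.

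\emph{Steps before $k$ and feasibility.} Steps $1,\ldots,k-1$ coincide, so they contribute equally and both algorithms reach the same state $F_k$. The hypothesis $\min(u,v)>\max(a,b)$ makes the two modified steps of $X^*$ legal under~\eqref{feasible}: step $k$ takes $A$ from $a$ to $v>a$, and step $k+1$ takes $B$ from $b$ to $u>b$.

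\emph{The continuation.} After step $k+1$, $X$ leaves $A$ at $u$ and $B$ at $v$. $X^*$ leaves $A$ at $v$ and $B$ at $u$. All other frontiers agree. So the post-$(k+1)$ state of $X^*$ is exactly that of $X$ with the labels $A$ and $B$ interchanged. The continuation of $X^*$ is, by construction, the continuation of $X$ with the same interchange. Hence at every step $j\ge k+2$ the two algorithms have the same sweep length, and their post-step states agree up to relabelling. Since $\Phi$ is symmetric, every post-step remaining weight $\Phi_{j+1}$ agrees. Feasibility of these steps carries over from $X$ under the relabelling, and the finiteness assumption lets us subtract the two series termwise. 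Therefore the continuations contribute equally, and only steps $k$ and $k+1$ survive in the difference.

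\emph{The two exchanged steps.} Write $R\defeq\Phi(F_k)-\ph(a)-\ph(b)$ for the remaining weight on the paths other than $A$ and $B$. Reading off post-step frontiers, $X$ contributes
\[
2u\bigl[R+\ph(u)+\ph(b)\bigr]+2v\bigl[R+\ph(u)+\ph(v)\bigr],
\]
while $X^*$ contributes
\[
2v\bigl[R+\ph(v)+\ph(b)\bigr]+2u\bigl[R+\ph(v)+\ph(u)\bigr].
\]
Subtracting, the $R$ terms cancel, as do $u\ph(u)$ and $v\ph(v)$. What remains is $2\bigl[(u-v)\ph(b)+v\ph(u)-u\ph(v)\bigr]$, which is~\eqref{exchange-difference}.

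The only point requiring care is the continuation argument, namely that the relabelled continuation really starts from the relabelled state, so that no skipping of sweeps is needed, unlike in Proposition~\ref{least}. The computation itself is routine.
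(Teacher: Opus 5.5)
Your proof is correct and takes essentially the same route as the paper: the two algorithms agree before step $k$, their continuations coincide after relabelling $A\leftrightarrow B$, and the difference comes from directly computing the contributions of the two exchanged steps. The only difference is bookkeeping: you write the post-step weights through the off-pair weight $R$, whereas the paper writes them relative to $\Phi_{k+2}$. Your explicit remark that finiteness justifies subtracting the series termwise is a worthwhile addition.
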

\begin{proof}
First, the two algorithms agree before step $k$.  Secondly, after step $k+1$, paths $A,B$ have
frontiers $u,v$ in $X$ and $v,u$ in $X^*$ and all other frontiers agree. Interchanging the labels $A,B$ after step $k+1$
therefore makes the two states identical. So the continuations of $X$ and $X^*$ are the same.

\bigskip

\begin{figure}[!ht]
\centering
\begin{minipage}{0.9\textwidth}
\resizebox{0.9\linewidth}{!}{%
\begin{tikzpicture}[x=.60cm,y=1.15cm]
\foreach \offset/\title/\aa/\ab/\ea/\eb/\sa/\sb/\statea/\stateb in {
 0/{$X$}/1.5/3/7/5/k/{k+1}/u/v,
 9.3/{$X^*$ before $A\leftrightarrow B$}/1.5/3/5/7/k/{k+1}/v/u,
 18.6/{$X^*$ after $A\leftrightarrow B$}/3/1.5/7/5/{k+1}/k/u/v}{
 \begin{scope}[xshift=\offset*.60cm]
  \node[paneltitle] at (3.6,2.05) {\title};
  \foreach \z/\lab in {1.5/a,3/b,5/v,7/u}{
   \draw[guide] (\z,-.15)--(\z,1.53);
   \node[ticklabel] at (\z,-.35) {$\lab$};
  }
  \node[ticklabel] at (0,-.35) {$0$};
  \foreach \h/\name/\old/\end/\step in {1.15/A/\aa/\ea/\sa,0/B/\ab/\eb/\sb}{
   \draw[pathaxis] (0,\h)--(7.85,\h);
   \node[pathlabel] at (-.55,\h) {$\name$};
   \draw[oldground] (0,\h)--(\old,\h);
   \draw[newground] (\old,\h)--(\end,\h);
   \fill[frontier] (\end,\h) circle[radius=1.6pt];
   \node[steplabel] at ({(\old+\end)/2},{\h+.26}) {step $\step$};
  }
  \node[state] at (3.6,-.83) {Post-step-$(k+1)$ frontiers: $(\statea,\stateb)$};
 \end{scope}
}
\draw[oldground] (4.5,-1.43)--(5.3,-1.43);
\node[anchor=west,font=\small] at (5.5,-1.43) {Previously searched};
\draw[newground] (13.6,-1.43)--(14.4,-1.43);
\node[anchor=west,font=\small] at (14.6,-1.43) {sweeps in steps $k$ and $k+1$};
\end{tikzpicture}%
}
\caption{Comparison of $X$ and $X^*$ in Lemma 2. After relabelling, the frontiers of $X^*$ match those of $X$, allowing the same continuation of the algorithm.}
\label{fig:lemma2}
\end{minipage}
\end{figure}

In sum, only steps $k$ and $k+1$ contribute to any difference. Moreover, since $\min(u,v) > \max(a,b)$, both algorithms satisfy condition (\ref{feasible}) and so we can use our established formulae to compute the difference of $E_{F_0}$ for the two algorithms. 
Write $\Phi_{k+1}^*$ and $\Phi_{k+2}^*$ for the corresponding remaining weights in $X^*$. Since the states after
the two sweeps differ only by relabelling,
\[
\Phi_{k+2}^*=\Phi_{k+2}.
\]
Also,
\[
\Phi_{k+1}=\Phi_{k+2}+\ph(b)-\ph(v),\qquad
\Phi_{k+1}^*=\Phi_{k+2}+\ph(b)-\ph(u).
\]
Thus
\[
\begin{aligned}
\tfrac12\bigl[E_{F_0}(X)-E_{F_0}(X^*)\bigr]
&=u\Phi_{k+1}+v\Phi_{k+2}-v\Phi_{k+1}^*-u\Phi_{k+2}\\
&=u\bigl[\ph(b)-\ph(v)\bigr]-v\bigl[\ph(b)-\ph(u)\bigr]\\
&=(u-v)\ph(b)+v\ph(u)-u\ph(v).
\end{aligned} \qedhere
\]
\end{proof}
Note that~\eqref{exchange-difference} involves only $\ph(b)$, the frontier of the path swept second, and not $\ph(a)$.

\begin{proposition}[monotonicity]\label{monotonicity}
Let $X$ be an algorithm started from $F_0$ such that no algorithm
from $F_0$ contributes strictly less than $X$. Then the step lengths of $X$ are nondecreasing.
\end{proposition}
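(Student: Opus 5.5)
Suppose, for contradiction, that $X$ is optimal from $F_0$ but has some step $k$ with $\ell_k>\ell_{k+1}$. Write, as in Lemma~\ref{exchange}, step $k$ sweeping $A$ from $a$ to $u=\ell_k$ and step $k+1$ sweeping $B$ from $b$ to $v=\ell_{k+1}$, so $u>v$. The plan is to show that the exchanged algorithm $X^*$ of Lemma~\ref{exchange} is strictly better, contradicting optimality.

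First I will dispose of structural cases. If $P_{k+1}=P_k$, then $v=\ell_{k+1}>f_{k+1}(P_{k+1})=u$ by~\eqref{feasible}, contradicting $u>v$; so $A\ne B$. Next I need the hypothesis $\min(u,v)>\max(a,b)$ of Lemma~\ref{exchange}, i.e.\ $v>a$ (note $v>b$ and $u>a$ by~\eqref{feasible}, and $u>v>b$). For this I will use optimality together with Proposition~\ref{least}: since $X$ admits no strict improvement, every step sweeps a least-frontier path. At step $k+1$ the path $B$ is least, and $A$ now has frontier $u$, so $b\le u$. At step $k$, $A$ is least, so $a\le b<v$. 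Hence $v>\max(a,b)$, and Lemma~\ref{exchange} applies. (One should check that optimality from $F_0$ passes to optimality of continuations from later states; this holds because any improvement of a continuation, prefixed by the unchanged first steps, improves $E_{F_0}$, since contributions depend on the past only through the state.)

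The main step is showing that the right side of~\eqref{exchange-difference} is strictly positive when $u>v>b$. Using~\eqref{small-phi} and dividing by $\kappa$, I need
\[
(u-v)b^{-\alpha}+v u^{-\alpha}-u v^{-\alpha}>0 .
\]
Since $b<v$, we have $b^{-\alpha}>v^{-\alpha}$, so the left side exceeds $(u-v)v^{-\alpha}+vu^{-\alpha}-uv^{-\alpha}=v(u^{-\alpha}-v^{-\alpha})$. That bound is negative, so this crude estimate fails, and the real obstacle is to exploit more information about $b$. I will therefore bring in Proposition~\ref{advance} (and stationarity, Lemma~\ref{stationarity}, which optimality forces): each sweep advances by more than a factor $\beta$, so $v>\beta b$, i.e.\ $b<v/\beta$, giving $b^{-\alpha}>\beta^{\alpha}v^{-\alpha}=(1+\alpha)v^{-\alpha}$ by~\eqref{beta}. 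Setting $r=u/v>1$ and dividing by $v^{1-\alpha}$, it then suffices to show
\[
g(r)\defeq(1+\alpha)(r-1)+r^{-\alpha}-r\ge0\quad\text{for } r\ge1,
\]
with strict inequality coming from the strict bound on $b$. We have $g(1)=0$ and $g'(r)=\alpha-\alpha r^{-\alpha-1}>0$ for $r>1$, so $g(r)>0$. Thus $E_{F_0}(X^*)<E_{F_0}(X)$, giving the contradiction.

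The delicate points to verify are: that $X^*$ is a legitimate algorithm (feasibility~\eqref{feasible} at steps $k,k+1$ follows from $v>a$ and $u>b$, and the relabelled continuation is feasible because the states coincide after relabelling), and that Propositions~\ref{least} and~\ref{advance} legitimately apply to an unimprovable $X$. The latter requires that $X$ is stationary, which follows from Lemma~\ref{stationarity} since any violation would give a strict improvement.
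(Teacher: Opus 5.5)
Your proposal is correct and follows essentially the same route as the paper: optimality forces least-frontier sweeps, stationarity, and the advance factor $v>\beta b$, after which Lemma~\ref{exchange} applies and the sign of $(u-v)b^{-\alpha}+vu^{-\alpha}-uv^{-\alpha}$ is settled by a monotonicity-in-$u$ argument starting from $u=v$. The only difference is cosmetic. You substitute $b^{-\alpha}>(1+\alpha)v^{-\alpha}$ first and reduce to a function of $r=u/v$ alone, whereas the paper normalizes $b=1$ and bounds $\partial g/\partial u$ below by $1-(1+\alpha)v^{-\alpha}$; the paper also states explicitly that $X$ has finite contribution, which your use of the difference formula implicitly requires.
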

\begin{proof}
Because no algorithm contributes strictly less than $X$, no local modification improves it either.
Also, $X$ has finite contribution, as otherwise an algorithm sweeping the paths cyclically with a fixed geometric ratio larger than 1 will be an improvement. Hence, by Proposition~\ref{least}, Lemma~\ref{stationarity} and Proposition~\ref{advance} in contrapositive form,
\begin{itemize}
\item every step sweeps a path of least current frontier;
\item $X$ is stationary;
\item every step satisfies $\ell_t>\beta f_t(P_t)$.
\end{itemize}
Suppose the step lengths are not nondecreasing in $X$, and choose an adjacent decrease (i.e., $\ell_{k+1} < \ell_{k}$). Write
\[
u\defeq \ell_k,\qquad v\defeq \ell_{k+1},\qquad \mbox{ so } v<u.
\]
Steps $k$ and $k+1$ sweep different paths, because two consecutive sweeps of the same path must strictly
increase its frontier. Let
\[
A\defeq P_k,\qquad B\defeq P_{k+1},\qquad a\defeq f_k(A),\qquad b\defeq f_k(B)=f_{k+1}(B).
\]
Since step $k$ sweeps a path of least frontier,
\begin{equation}\label{frontier-order}
a\le b.
\end{equation}
Also, Proposition~\ref{advance} applied to step $k+1$ gives
\begin{equation}\label{advance-pair}
v>\beta b.
\end{equation}
The above two inequalities imply $\min(u,v)>\max(a,b)$, and so we can use Lemma~\ref{exchange} now.
Substituting $\ph(x)=\kappa x^{-\alpha}$ into~\eqref{exchange-difference} and cancelling $2\kappa>0$,
the exchange strictly lowers the contribution if and only if
\begin{equation}\label{exchange-positive}
(u-v)b^{-\alpha}+vu^{-\alpha}-uv^{-\alpha}>0.
\end{equation}

To prove~\eqref{exchange-positive}, we may assume $b=1$: dividing $u,v,b$ by $b>0$ multiplies
its left-hand side by the positive factor $b^{\alpha-1}$, preserving its sign. We then have $u>v>\beta$. Define
\[
g(u,v)\defeq (u-v)+vu^{-\alpha}-uv^{-\alpha}.
\]
We must show $g(u,v)>0$.

For fixed $v$, we have $g(v,v)=0$. Thus it is enough to show that $g$ increases as $u$ moves above $v$.
Differentiating with respect to $u$,
\[
\frac{\partial g}{\partial u}=1-v^{-\alpha}-\alpha vu^{-\alpha-1}.
\]
The role of Proposition~\ref{advance} is now clear: $u>v$ alone is not enough. At the diagonal,
\[
\frac{\partial g}{\partial u}(v,v)=1-(1+\alpha)v^{-\alpha},
\]
which is positive exactly when $v>\beta$, since $\beta^\alpha=1+\alpha$.

Moreover, for $u>v$, $u^{-\alpha-1}<v^{-\alpha-1}$, so
$\alpha vu^{-\alpha-1}<\alpha v\,v^{-\alpha-1}=\alpha v^{-\alpha}$. Therefore,
\[
\frac{\partial g}{\partial u}
=1-v^{-\alpha}-\alpha vu^{-\alpha-1}
>1-(1+\alpha)v^{-\alpha},
\]
which is already proved to be positive in the diagonal case.
Thus~\eqref{exchange-positive} is proved. So Lemma~\ref{exchange} gives an algorithm $X^*$ that contributes strictly less than $X$, leading to
a contradiction.
\end{proof}

\section{Main result}

The theorem below shows that the lower bound established in~\cite{KMSY} for cyclic algorithms is a lower bound for all
algorithms. Therefore, the cyclic assumption may be dropped, as claimed in Fact~4 of~\cite{KMSY}.

\begin{theorem} \label{main}
The least expected competitive ratio over all algorithms started from the state $F_0$ can be attained by a cyclic algorithm. 
\end{theorem}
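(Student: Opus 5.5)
We argue by showing that an algorithm with no strict improvement is necessarily cyclic, and then handle existence of the infimum separately. Let $X$ be an algorithm started from $F_0$ such that no algorithm contributes strictly less than $X$. (We return below to the question of whether such an $X$ exists.) We then show that $X$ is cyclic. As in the proof of Proposition~\ref{monotonicity}, $X$ has finite contribution. Hence Proposition~\ref{least}, Lemma~\ref{stationarity}, Proposition~\ref{advance} and Proposition~\ref{monotonicity} all apply. So every step sweeps a path of least current frontier, every step satisfies $\ell_t>\beta f_t(P_t)>f_t(P_t)$, and $\ell_1\le\ell_2\le\cdots$.

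The combinatorial core is an induction on $t$. We claim that at each step the multiset of frontiers is $\{\ell_{t-w},\ldots,\ell_{t-1}\}$, with initial ones counting as $1$, and that the path swept at step $t$ is the one swept at step $t-w$.

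For $t\le w$: since $\ell_1>\beta>1$, a swept path never again has the least frontier while some path is still at $1$. Indeed, any swept path's frontier is at least $\ell_1>1$. So the first $w$ steps sweep $w$ distinct paths.

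For $t>w$: by monotonicity, the frontiers $\ell_{t-w},\ldots,\ell_{t-1}$ are nondecreasing in their time of last sweep. Each path's current frontier is the length of its most recent sweep. So a least-frontier path is the one swept longest ago, namely $P_{t-w}$. Ties are broken harmlessly: if $\ell_{t-w}=\ell_{t-w+1}$, choosing either path yields the same state up to relabelling. By the symmetry remarks after~\eqref{objective}, we may relabel so that the order is exactly cyclic without changing the contribution. Thus $X$, up to relabelling, visits the paths in a fixed cyclic order.

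It remains to justify that the infimum of $E_{F_0}$ over all algorithms is attained, or to avoid needing this. The cleanest route avoids attainment: take any $X$ and construct a cyclic algorithm at least as good. I would try to turn each proposition into a finite improvement procedure, but infinitely many local fixes raise convergence issues. So the fallback plan is a compactness argument, which is the step I expect to be the main obstacle.

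First normalise by scaling. The objective is homogeneous: for fixed $F$, scaling all lengths by $c$ multiplies contributions by $c^{1-\alpha}$, up to the initial frontiers. Then take a minimizing sequence $X^{(n)}$, which we may assume already satisfies Proposition~\ref{least} and the feasibility condition~\eqref{feasible}. Use a diagonal argument to extract step-wise limits of $(P_t,\ell_t)$. The paths take finitely many values, and the lengths are bounded for each fixed $t$, because a single step with huge $\ell_t$ costs $2\ell_t\Phi_{t+1}\ge 2\ell_t(w-1)\ph(\ell_{t-1}\text{-ish})$. Apply Fatou's lemma to the nonnegative sum~\eqref{objective} to get $E_{F_0}(X^\infty)\le\liminf E_{F_0}(X^{(n)})$. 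One must also check that the limit has lengths strictly increasing on each path and is unbounded on each path; otherwise its contribution is infinite. Degenerate limits with $\ell_{t}=f_t(P_t)$ are simply deleted. This gives a minimizer, to which the argument above applies, so it is cyclic.

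Finally, by~\eqref{oneplus} the least expected competitive ratio equals $1+E_{F_0}$ of this cyclic algorithm, which proves the theorem.
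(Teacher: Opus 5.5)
Your proposal is correct and follows essentially the same route as the paper: you obtain a minimizer by step-wise compactness, as in Appendix~\ref{section-existence}, and then use least-frontier sweeping, nondecreasing lengths and tie-breaking relabelling to force the cyclic order; the paper makes your ``$\ell_{t-1}$-ish'' length bound uniform in $n$ by induction on $t$, using that the other $w-1$ paths have frontiers at most $\max\{1,B_1,\ldots,B_{t-1}\}$. The scaling normalisation and the assumption that the minimizing sequence already obeys Proposition~\ref{least} are unnecessary and can simply be dropped.
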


\begin{proof}

First, we should not take for granted that an optimal algorithm exists: there could be infinitely many algorithms that approach optimality without ever reaching it.
The proof that an optimal algorithm exists for our chosen distribution is rather technical and left to Appendix~\ref{section-existence}. 
Here, we simply start with an optimal algorithm $X$ and prove that $X$ can be converted into a cyclic algorithm.  

By Propositions~1 and~3, every step of $X$ sweeps a path of least frontier and the sweeping lengths satisfy
$\ell_1\le\ell_2\le\cdots$. Note that each path's frontier is the length of its last
sweep, or $1$ if unswept. We construct an algorithm $Y$ from $X$ by breaking every least-frontier tie in favor of the least recently swept path and
interchanging the tied labels from that step on; this preserves the multiset frontiers and the contribution. 

The resulting algorithm $Y$ is cyclic. Since every $\ell_t>1$, the first $w$ steps sweep distinct paths.
Inductively, suppose steps $t-w+1,\dots,t$ sweep distinct paths. The least frontier is then $\ell_{t-w+1}$, held by $P_{t-w+1}$,
which is also the least recently swept path. Step $t+1$ therefore sweeps $P_{t-w+1}$, continuing the cyclic order. The cyclic algorithm $Y$ attains the least
competitive ratio.
\end{proof}

\section{Discussion}

Professor Allan Borodin recently pointed out to us that Schuierer~\cite{Schuierer} introduced another probability distribution to give an alternative and simpler proof of the lower bound established in~\cite{KMSY}, but his proof relied on the cyclic assumption. 
It is natural to ask if one can prove an optimal algorithm must be cyclic under his
distribution (a truncated $H_D$ where density is 0 beyond a bound $D$). However, our arguments use the facts
that $\ph$ is positive, strictly decreasing, and smooth, whereas under $H_D$ the function $\ph$ vanishes at
and beyond $D$. Moreover, each path is swept to $D$ in finitely many steps in~\cite{Schuierer}, and
algorithm modifications like ours become difficult. 
Nevertheless, numerics
for $w=3$ suggest the optimal algorithm may still be cyclic.

\section*{Acknowledgements}
We thank Professor Allan Borodin for his recent inquiry about the cyclic assumption, which prompted this research note. His inquiry made us realize there is a continued interest in our proof, which was obtained in 1994 but probably not typed up. 

With efforts, we were able to find only
some hand-written claims and proof sketches that survived two cross-country moves. The detailed proof presented here is constructed from those notes, with the aid of AI, especially Claude Opus 5, in September 2026.

We thank Professor Dan Kleitman again for his invaluable discussions, which helped us come up with the original proof when we conducted this research at MIT in the 1990s.

\definecolor{ink}{HTML}{000000}
\definecolor{change}{HTML}{FFFFFF}
\definecolor{clarify}{HTML}{FFFFFF}
\definecolor{muted}{HTML}{000000}

\setlength{\emergencystretch}{2em}
\renewcommand{\arraystretch}{1.1}
\newcolumntype{L}[1]{>{\raggedright\arraybackslash}p{#1}}
\newcolumntype{Y}{>{\raggedright\arraybackslash}X}
\newcommand{\heading}[1]{\vspace{7pt}{\large\bfseries\color{ink}#1}\par}
\newcommand{\highlight}[3]{\noindent\colorbox{#1}{\parbox{\dimexpr\linewidth-2\fboxsep\relax}{\textbf{#2}\par #3}}\par}

\appendix

\section{Existence of an optimal algorithm} \label{section-existence}

We begin by defining the following infimum, as opposed to a minimum, because we do not want to assume the infimum is reachable by an algorithm: 
\[
V(F)\defeq\inf_{X\text{ started at }F}E_F(X).
\]
The proof in this section is rather technical but independent of any of the propositions proved in earlier sections. In particular, the following general statement has nothing to do with cyclicity.

\begin{statement}
\noindent For $w\ge2$, there exists an algorithm $X$ starting from $F_0$ such that $E_{F_0}(X)=V(F_0)$.
\end{statement}

\begin{proof}
Write $V_0=V(F_0)$, which is finite because sweeping the paths cyclically with geometrically increasing lengths gives finite contribution. Choose algorithms $X^{(1)},X^{(2)},\ldots$ from $F_0$ such that
\[
E_{F_0}(X^{(n)})\le V_0+1,
\qquad E_{F_0}(X^{(n)})\longrightarrow V_0.
\]
Denote step lengths and swept paths of $X^{(n)}$ by $\ell_t^{(n)}$ and $P_t^{(n)}$. We will extract a limiting sequence and show that it yields an algorithm attaining $V_0$.

\medskip
\noindent\textbf{Part 1: bound each step length so that convergent subsequences can be extracted.}

We show that, for every fixed $t$, there is a finite constant $B_t$, independent of $n$, such that $\ell_t^{(n)}\le B_t$. The proof proceeds by induction on $t$. Since every term of (\ref{objective}) is nonnegative,
\begin{equation}
2\ell_t^{(n)}\Phi_{t+1}^{(n)}\le V_0+1
\qquad\text{for all }n,t.\label{eq:term-bound}
\end{equation}
\textit{Base case: $t=1$.} The $w-1$ paths other than $P_1^{(n)}$ still have frontier $1$ after the first step. Thus $\Phi_2^{(n)}\ge(w-1)\kappa$. Define
\[
B_1\stackrel{\mathrm{def}}{=}\frac{V_0+1}{2(w-1)\kappa}, 
\]
where the denominator is positive as  $w  \ge 2$. Therefore, $\ell_1^{(n)}\le B_1$ by \eqref{eq:term-bound}. 

\noindent\textit{Induction step:} Now suppose $\ell_j^{(n)}\le B_j$ for all $j<t$ and all $n$, and define
\[
B_t\stackrel{\mathrm{def}}{=}\frac{(V_0+1)B^\alpha}{2(w-1)\kappa},
\]
where
\[
B\stackrel{\mathrm{def}}{=}\max\{1,B_1,\ldots,B_{t-1}\}.
\]
After step $t$, every path other than $P_t^{(n)}$ has frontier at most $B$: its frontier is either $1$ or a length from an earlier step. Since $\phi$ is decreasing, each of these $w-1$ paths contributes at least $\phi(B)=\kappa B^{-\alpha}$, and so $\Phi_{t+1}^{(n)}\ge(w-1)\kappa B^{-\alpha}>0$. 
Hence, $\ell_t^{(n)}\le B_t$ by \eqref{eq:term-bound}, thereby completing the induction. 

\medskip
\noindent\textbf{Part 2: extract a limiting sequence whose contribution is at most $V_0$.}

Among the algorithms $X^{(n)}$, some path label at step $1$ occurs infinitely often. Select a subsequence on which this label is constant, and call it $P_1$. Since the lengths $\ell_1^{(n)}$ lie in $[1,B_1]$, the Bolzano--Weierstrass theorem gives a further subsequence on which they converge to some $\ell_1$.

Repeat these two selections for step $2$, then step $3$, and so on, each time working within the preceding subsequence. For each step $t$, define $P_t$ to be the constant path label and $\ell_t$ to be the limit of the step lengths. Further selections preserve the labels and limits established earlier. This defines the following algorithm:
\[
X\stackrel{\mathrm{def}}{=}((P_t,\ell_t))_{t\ge1}.
\]
We will verify in Part 3 that every path is swept without bound and every step strictly advances its path's frontier.

Fix $T$ and consider the subsequence obtained after the selections for step $T$. Along this subsequence, the first $T$ path labels agree with those of $X$, and the corresponding lengths converge. The frontiers therefore also converge, giving $\ell_t\ge f_t(P_t)$ for $t\le T$. By continuity of $\phi$,
\begin{equation}
\ell_t^{(n)}\Phi_{t+1}^{(n)}
\xrightarrow{n\to\infty}\ell_t\Phi_{t+1}
\qquad(t\le T).\label{eq:term-convergence}
\end{equation}
Thus, taking limits along this subsequence,
\[
\begin{aligned}
2\sum_{t\le T}\ell_t\Phi_{t+1}
&=\lim_{n\to\infty}2\sum_{t\le T}\ell_t^{(n)}\Phi_{t+1}^{(n)}
&&\text{(since $T$ is finite)}  \\
&\le\lim_{n\to\infty}E_{F_0}(X^{(n)}) \\
& = V_0
\end{aligned}
\]
where the inequality holds because the omitted terms are nonnegative. Since this holds for every $T$, taking $T\to\infty$ gives
\[
2\sum_{t\ge1}\ell_t\Phi_{t+1}\le V_0.
\]

\medskip
\noindent\textbf{Part 3: ensure unbounded search and conclude optimality.}

First, every path is swept without bound in $X$ (so that the resulting algorithm always finds the goal). Otherwise, some path would have frontier at most $M$ at every step, giving $\Phi_{t+1}\ge\kappa M^{-\alpha}$. Since $\ell_t\ge1$,
\[
2\sum_{t\ge1}\ell_t\Phi_{t+1}
\ge2\kappa M^{-\alpha}\sum_{t\ge1}1=+\infty,
\]
contrary to Part 2.

Note that no step would have $\ell_t=f_t(P_t)$: such a step changes no frontier but has a positive contribution and can be deleted with a reduced total contribution, contradicting the definition of $V_0$.

By the definition of $V_0$ and the contribution bound from Part 2,
\[
V_0\le E_{F_0}(X)\le V_0.
\]
Thus $E_{F_0}(X)=V_0$, as required.
\end{proof}

\section{Terminology and notation} \label{section-notation}

\begin{tabularx}{\linewidth}{@{}L{1.46in}Y@{}}
\toprule
\textbf{Notation/term} & \textbf{Meaning}\\
\midrule
$w$ & Number of paths.\\
$h$ & Distance of the goal from the origin.\\
$\epsilon>0$ & Parameter of the input distribution.\\
$f_{\epsilon,w}$ & Input distribution: conditional distance density $\epsilon x^{-(1+\epsilon)}$, for $x\ge1$, and uniform choice of path. \\
$H$ & Per-path measure, with $dH(h)=\frac{\epsilon}{w}h^{-(1+\epsilon)}\,dh$ for $h\ge1$.\\
$\alpha=1+\epsilon$ & Exponent in the remaining-weight function.\\
$\kappa=\frac{\epsilon}{w(1+\epsilon)}$ & Scaling constant in $\phi(x)=\kappa x^{-\alpha}$.\\
$\beta=(1+\alpha)^{1/\alpha}$ & Advance-factor threshold in Proposition 2; $\beta^\alpha=1+\alpha$.\\
Stationary & Algorithm is stationary if $\Phi_{t+1}=-\phi'(\ell_t)W_t \mbox{ } \forall t$. \\

$t^+$ & Next step after $t$ that sweeps the same path $P_t$.\\
$P_t$ & Path swept at step $t$.\\
$\ell_t$ & Sweep length at step $t$, measured from the origin.\\
$f_t(P)$ & Frontier of path $P$ before step $t$.\\
$F_t$ & State before step $t$: the collection of all path frontiers.\\
$F_0$ & Initial state, with every frontier set to $1$.\\
\rowcolor{change}
$\phi(x)$ & Remaining weight on a path with frontier $x$.\\
\rowcolor{change}
$\Phi(F)$ & Total remaining weight in state $F$.\\
\rowcolor{change}
$\Phi_t=\Phi(F_t)$ & Total remaining weight right before step $t$.\\
\rowcolor{change}
$\widehat{\Phi}_{t+1}$ & Total remaining weight associated with all paths other than $P_t$ after step $t$.\\
\rowcolor{clarify}
$2\ell_t\Phi_{t+1}$ & Contribution of step $t$ to the expected competitive ratio.\\
\rowcolor{clarify}
$E_F(X)$ & Contribution of algorithm $X$, started from state $F$:
$E_F(X)=2\sum_t\ell_t\Phi_{t+1}$.\\
$W_t$ & Total (one-way) sweep length from step $t$ up to, but excluding, step $t^+$:
$W_t=\sum_{t\le j<t^+}\ell_j$.\\
$\widehat W_t$ & Total (one-way) sweep length strictly between steps $t$ and $t^+$:
$\widehat W_t=\sum_{t<j<t^+}\ell_j$.\\
\bottomrule
\end{tabularx}


\begin{thebibliography}{8}
\bibitem{BCR} R.\ A.\ Baeza-Yates, J.\ C.\ Culberson and G.\ J.\ E.\ Rawlins, Searching in the plane,
\emph{Information and Computation} \textbf{106} (1993), 234--252.
\bibitem{KRT} M.-Y.\ Kao, J.\ H.\ Reif and S.\ R.\ Tate, Searching in an unknown environment: an optimal
randomized algorithm for the cow-path problem, \emph{Information and Computation} \textbf{131}
(1996), no.\ 1, 63--79; preliminary version in \emph{Proc.\ 4th ACM--SIAM Symposium on Discrete
Algorithms} (1993), 441--447.
\bibitem{KMSY} M.-Y.\ Kao, Y.\ Ma, M.\ Sipser and Y.\ Yin, Optimal constructions of hybrid algorithms,
\emph{Journal of Algorithms} \textbf{29} (1998), 142--164; preliminary version in \emph{Proc.\ 5th ACM--SIAM
Symposium on Discrete Algorithms} (1994), 372--381.
\bibitem{MY} Y.\ Ma and Y.\ Yin, {On the cyclic assumption of an on-line search algorithm}, \emph{manuscript},
1994; earlier version of this note.
\bibitem{Schuierer} S.\ Schuierer, A lower bound for randomized searching on $m$ rays, in: \emph{Computer Science
in Perspective} (Ottmann Festschrift), Lecture Notes in Computer Science 2598, Springer,
2003, 264--277.
\bibitem{Yao} A.\ C.-C.\ Yao, Probabilistic computations: toward a unified measure of complexity, \emph{Proc.\
18th IEEE Symposium on Foundations of Computer Science} (1977), 222--227.
\end{thebibliography}
\end{document}